\documentclass[conference,letterpaper]{IEEEtran}

\usepackage{amsmath,amssymb,amsfonts,amsthm}
\usepackage{graphicx}
\usepackage{booktabs}
\usepackage{multirow}
\usepackage{enumitem}
\usepackage{cite}
\usepackage{comment}
\usepackage[textsize=tiny]{todonotes}

\theoremstyle{plain}
\newtheorem{theorem}{Theorem}
\newtheorem{lemma}{Lemma}
\newtheorem{corollary}{Corollary}

\theoremstyle{remark}
\newtheorem{remark}{Remark}
\newtheorem{example}{Example}

\newcommand{\Ftwo}{\mathbb{F}_2}
\newcommand{\wt}{\mathrm{w}}

\newcommand{\T}{\mathsf{T}}

\usepackage[acronym,shortcuts]{glossaries}

\newacronym{AWGN}{AWGN}{additive white Gaussian noise}
    \newacronym{BPSK}{BPSK}{binary phase-shift keying}
\newacronym{LDPC}{LDPC}{low-density parity-check}
\newacronym{PRC-LDPC}{PRC-LDPC}{primitive rate-compatible LDPC}
\newacronym{QC}{QC}{quasi-cyclic}

\usepackage{tikz}
\usetikzlibrary{arrows.meta,automata, positioning}
\usetikzlibrary{calc, chains,
                fit,
                positioning,
                shapes}
\usetikzlibrary{patterns}       
\usepackage{arydshln}

\usepackage{pgfplots}
\usepackage{dashrule}

\usetikzlibrary{matrix,fadings,decorations.pathmorphing,intersections}
\usepgfplotslibrary{fillbetween}
\pgfdeclarelayer{bg}
\pgfsetlayers{bg,main}
\makeatletter
\tikzset{use path/.code=\tikz@addmode{\pgfsyssoftpath@setcurrentpath#1}}
\makeatother

\begin{document}

\title{Design of a Family of Rate-Adaptive LDPC Convolutional Codes with Short Constraint Length}

\author{\IEEEauthorblockN{Massimo Battaglioni, Edoardo Carnevali, Franco Chiaraluce, Marco Baldi, Giovanni Cancellieri}\\
\IEEEauthorblockA{\textit{Department of Information Engineering}, Università Politecnica delle Marche\\
Ancona, 60131, Italy \\
\{m.battaglioni, f.chiaraluce, m.baldi, g.cancellieri\}@univpm.it, edoardo.carnevali@pm.univpm.it}
}

\maketitle

\begin{abstract}
This paper studies the design of some high-rate, adaptive, low-density parity-check (LDPC) convolutional codes with short constraint length, based on interleaved parity-check polynomials.
Starting from some known results regarding the selection of polynomials that yield codes free of short cycles in the associated Tanner graphs, we propose some new constructions based on Golomb rulers, along with relaxed variants that do not strictly satisfy the Golomb ruler condition, but retain similar structural properties. The result is a family of rate-adaptive code constructions with code rates of the form $R=(n_0-1)/n_0$. Conditions for the absence of length-4 cycles are derived, together with upper bounds on the free distance that are met in all considered examples.
\end{abstract}

\begin{IEEEkeywords}
Free distance, Golomb rulers, LDPC convolutional codes, rate-adaptive codes.
\end{IEEEkeywords}

\section{Introduction}

\Ac{LDPC} convolutional codes are placed at the intersection of convolutional codes and \ac{LDPC} codes, and have attracted growing interest in recent years.
They are usually represented through their parity-check matrix \cite{FelteyLDPCConv}, which has some special properties. At each time instant $t$, a convolutional encoder maps an input block of $k_0$ information bits into a block of $n_0>k_0$ coded bits as a function of the current and a finite number of past input blocks, yielding a sequential code with asymptotic rate $R=k_0/n_0$. Termination can be obtained by ending the sequence of coded blocks after a finite number of time instants while retaining the parity constraints induced by the final blocks. This slightly reduces the effective rate, although the loss vanishes for long frames and can be completely avoided through tail biting~\cite{TailBiting}. After suitable row and column permutations, the parity-check matrix of a tail-biting LDPC convolutional code can be written in \ac{QC} form~\cite{QCcodes}, that is, as a matrix composed of circulant blocks.
Several design approaches start from such a form, optimize the choice of the circulant blocks and then unwrap the matrix into a sparse convolutional representation~\cite{Pusane2011,Battaglioni2023Mitch,Yang2023}. In contrast, the approach we propose in this work does not rely on unwrapping, but instead obtains the convolutional parity-check matrix via periodic row elimination applied to the parity-check matrix of a class of rate-compatible block \ac{LDPC} codes derived from cyclic codes. 
Rateless block codes based on primitive parity-check polynomials have been investigated in \cite{Shirvanimoghaddam}. Their rate-compatible \ac{LDPC} counterparts were later introduced in \cite{Battaglioni2024Early} under the name of \ac{PRC-LDPC} codes and further analyzed in \cite{BattaglionisimplexTIT2025, Namadchi}. The construction we propose stems from this line of research, but without requiring the underlying parity-check polynomials to be primitive. This relaxation significantly broadens the design space and differentiates our construction from existing approaches. To avoid harmful length-$4$ cycles in the code's Tanner graph, constructions inspired by Golomb rulers \cite{Golomb1972} are exploited.  Notably, unlike block codes, the convolutional framework we consider enables controlled relaxation of the Golomb ruler constraint, which leads to fine rate adaptivity.

In~\cite{Cho2015}, short-memory protographs free of length-$4$ cycles are obtained by imposing distinct intra-column distances, whereas our construction permits repeated differences and provides multiple rates from a single exponents vector. Rate adaptivity in~\cite{KoikeAkino2018RateAdaptive} is instead achieved by shortening and puncturing a mother \ac{LDPC} convolutional code.
Another well-known family with structural similarities to our construction is Massey's convolutional self-orthogonal codes \cite{Massey1963Threshold}. Since their introduction, several design methodologies have been developed \cite{Robinson1967RecurrentCodes, Klieber1970DifferenceTriangles, Wu1975ConvCodesPart1, Wu1976ConvCodesPart2, Haccoun2005SelfDoublyOrthogonal},  all relying on self-orthogonality conditions, typically expressed as parity constraints on shifted parity-check equations overlaps.
More recently, such constructions have been revisited to derive spatially coupled \ac{LDPC} codes via protograph lifting \cite{Costello2025HighRateSCLDPC}. In contrast, our approach is not based on generator-matrix orthogonality constraints and does not enforce self-orthogonality, but relies on a parity-check polynomial representation. In this setting, the cycle properties of the resulting Tanner graph are governed by modular conditions on the differences of the exponents vector, rather than by parity constraints on overlapping parity-check equations. This leads to a different design paradigm, enabling a larger set of admissible constructions while still allowing control of short cycles.
This enables the design of a family of high-rate \ac{LDPC} convolutional codes with adaptive rates of the form \(R=\frac{n_0-1}{n_0}\). A single interleaved parity-check polynomial \(h(x)\) and the block length \(n_0\) determine the parity-check matrix, enabling a direct characterization of the cycles and closed-form free-distance bounds~\cite{Cancellieri2015},  which coincide with the distance estimates in all considered examples.


\section{Notation and Background}
\label{sec:prelim}

We work over the binary field $\Ftwo$. A binary polynomial of degree $L-1$ is written as

\begin{equation}
h(x) = \sum_{j=0}^{L-1} h_j x^j,\qquad h_j \in \Ftwo,
\label{eq:poly}
\end{equation}
and has weight $\gamma=\wt[h(x)]$ equal to the number of its nonzero coefficients. We associate with \(h(x)\) its coefficient vector
\(\mathbf h=(h_0,\ldots,h_{L-1})\).  Equivalently, we represent a polynomial  by listing the exponents of its nonzero coefficients, e.g.,
\[
h(x)\;\leftrightarrow\; (a_0,a_1,\dots,a_{\gamma-1})\triangleq \mathbf{a},
\]
where $a_0<a_1<\cdots<a_{\gamma-1}=L-1$. In this paper, we assume $a_0=0$.  
We call $\mathbf{a}$ the \emph{exponents vector}. Thus, $L$ denotes the polynomial support length, whereas the associated Golomb ruler has length $a_{\gamma-1}=L-1$, with entries at positions $a_i$. The reciprocal of a polynomial $h(x)$ of degree $k$ is denoted as $h^*(x)=x^kh(x^{-1})$.

A binary linear block code $C$ of dimension $k$ is
\[
C=\{\mathbf c\in\mathbb F_2^n:\mathbf c\mathbf H^{\mathsf T}=\mathbf0\},
\]
that is, the kernel of a parity-check matrix $\mathbf H\in\mathbb F_2^{\rho\times n}$, for some $\rho\geq \operatorname{rank}(\mathbf H)=n-k$. Its rate is $R=k/n$, and its minimum distance $d_{\min}$ is the smallest Hamming weight of a nonzero codeword. For cyclic codes, we denote the codeword length by $N$. Given a parity-check polynomial $h(x)=\sum_{i=0}^k h_i x^i$ of degree $k$ over $\mathbb{F}_2$, with $h_0=h_k=1$, the parity-check matrix of the corresponding cyclic code is 

\begin{equation}
 \widetilde{\mathbf{H}} =
\begin{tikzpicture}[baseline=(current bounding box.center) ]
 \matrix[matrix of math nodes,left delimiter=(,right delimiter=),scale=0.6] (m)
   {
h_0 & h_1 & \cdots & h_k & & &&&& \\
& h_0 & h_1& \cdots  & h_k &&&&& \\
& & \ddots & \ddots & \vdots &&& \\
& &  & h_0 & h_1 & \cdots & h_k &&\\
};
\draw[<->,thick] ([xshift=5mm]m.south east) -- node[right] {\begin{tabular}{@{}l@{}}
    $N-k$, 
  \end{tabular}} ([xshift=5mm]m.north east);
\end{tikzpicture}
\label{eq:formH_simplexcodes}
\end{equation}
where blank spaces represent zeros.

A time-invariant convolutional code can be defined as the set of code sequences $\mathbf v=\{\mathbf v_t\}_{t\geq0}$, with $\mathbf v_t=\mathbf 0$ for $t<0$, satisfying a linear finite-memory parity constraint described by the polynomial matrix
\[
H^{\T}(D)=H_0^{\T}+H_1^{\T}D+\cdots+H_{m_s}^{\T}D^{m_s},
\]
called the syndrome former. Let $\mathbf v_t$ denote the block of $n_0$ symbols transmitted
at time $t$. Writing $\mathbf v_t=(v_t^{(0)},\ldots,v_t^{(n_0-1)})$, we associate with a finite-weight code sequence the scalar polynomial
\[
v(x)=\sum_{t\geq0}\sum_{r=0}^{n_0-1}v_t^{(r)}x^{n_0t+r}.
\] The parity relations are
\begin{equation}
\mathbf v_t H_0^{\T}
+\mathbf v_{t-1} H_1^{\T}
+\cdots
+\mathbf v_{t-m_s} H_{m_s}^{\T}
=\mathbf 0,
\qquad t\geq 0.
\label{eq:time_syndrome}
\end{equation}
Equivalently, \(    \mathbf v\,\mathbf H^{\T}=\mathbf 0,\)
where the semi-infinite matrix
\begin{equation}
\mathbf H^{\T}=
\begin{pmatrix}
H_0^{\T} & H_1^{\T} & \cdots & H_{m_s}^{\T} &        &        & \\
         & H_0^{\T} & H_1^{\T} & \cdots & H_{m_s}^{\T} &        & \\
                  &          & \ddots & \ddots & \ddots & \ddots
\end{pmatrix}
\label{eq:syndromeformer}
\end{equation}
is the semi-infinite syndrome former matrix. It is convenient to define the \emph{syndrome former submatrix} \(\mathbf{H}_s=\begin{pmatrix}
    H_0^{\T} & \cdots & H_{m_s}^{\T} \\
\end{pmatrix}\). 
The integer $m_s$ is the syndrome former memory order and the syndrome former constraint length is defined as $\nu_s=(m_s+1)n_0$. The \emph{free} distance $d$ of a convolutional code is the smallest Hamming weight among all nonzero finite-weight code sequences. For soft-decision maximum-likelihood decoding, the asymptotic coding gain is $G_\infty=Rd$~\cite[Chapter 1]{LinCost}. A convolutional code is said to be an \ac{LDPC} convolutional code if the column weight of its parity-check matrix is much smaller than $(n_0-k_0)(m_s+1)$.

\remark Starting from  \eqref{eq:formH_simplexcodes},  if \(
N-k\ge n_0\left(\left\lceil\frac{L}{n_0}\right\rceil-1\right)+1
\), a convolutional
structure can be obtained by periodic row elimination: keeping one row every $n_0$ produces a finite syndrome former. The repetition of this submatrix along time generates the  convolutional parity relations. The corresponding convolutional code has  $R=\frac{n_0-1}{n_0}$ and $\nu_s=n_0\left\lceil \frac{L}{n_0} \right\rceil$.  \label{rem:decim}

Given the target block length $n_0$ and rate $(n_0-1)/n_0$ of a convolutional code we aim at designing, it is useful to partition its parity-check polynomial $h(x)$ into $n_0$ sub–polynomials $h_r(\cdot)$ with smaller support as follows
\begin{equation}
h(x)=\sum_{r=1}^{n_0} x^{n_0-r}h_r(x^{n_0}),
\label{eq:mod-decomp}
\end{equation}
so that each sub–polynomial describes the contribution of the $r$-th symbol inside each block of $n_0$ symbols. Accordingly, $h(x)$ is called \emph{interleaved} parity-check polynomial. 

\begin{example}
Consider the  parity-check polynomial
\[
h(x)=1+x^2+x^5 \quad\Longleftrightarrow\quad \mathbf{a}=(0,2,5),
\]
and set \( n_0 = 2 \). By removing one row every \( n_0=2 \) rows from \eqref{eq:formH_simplexcodes}, and replicating the resulting matrix, we obtain a time-invariant convolutional code with $m_s=2$ and $\nu_s=6$. The transposed syndrome former submatrix is 
\[
\mathbf H_s^{\mathsf T} =
\begin{pmatrix}
0 & 1 \\
1 & 0 \\
1 & 0
\end{pmatrix}.
\]
\label{ex:ex1}
\end{example}

A length-$4$ cycle in the Tanner graph of $\mathbf H$ corresponds to a $2\times 2$ submatrix of $\mathbf H$ whose entries are all equal to $1$. A length-$4$ cycle occurs in \eqref{eq:formH_simplexcodes} if, given $h(x)$ and the corresponding $\mathbf a$, there exist two distinct ordered pairs $(i,j)\neq(i',j')$, with $0\leq i<j\leq\gamma-1$ and $0\leq i'<j'\leq\gamma-1$, such that~\cite{BattaglioniFitce}:
\begin{equation}
S_{j,i}=S_{j',i'}>0,
\label{eq:n2_equal_diff}
\end{equation}
where, for any  $(i,j)$ with $0\le i<j\le \gamma-1$, the span $S_{j,i}$ is defined as
\begin{equation}
S_{j,i}\triangleq a_j-a_i .
\label{eq:suv}
\end{equation}
We refer to an equality of the form
\(
S_{j,i}=S_{j',i'}
\)
between two distinct ordered pairs as an \emph{equal-span relation}.

\remark Following \eqref{eq:n2_equal_diff}, a necessary and sufficient condition to avoid length-$4$ cycles in a \ac{PRC-LDPC} code \cite{Battaglioni2024Early} is that the exponents vector $\mathbf{a}$ is a Golomb ruler. \label{rem:rem2}

\section{LDPC convolutional codes with $n_0=2$}
\label{sec:half_rate}

Let \(h(x)\) be an interleaved parity-check polynomial of weight $\gamma=\wt[h(x)]$, as in \eqref{eq:poly}. For $n_0=2$, the decomposition \eqref{eq:mod-decomp} becomes
\begin{equation}
h(x)=x\,h_1(x^2)+h_2(x^2),
\label{eq:n2_decomposition}
\end{equation}
where $h_1(\cdot)$ collects coefficients at odd positions and $h_2(\cdot)$ those at even positions.

Let us associate to each ordered pair $(u,v)$, with $0 \le u < v \le \gamma-1$, the interval \(\mathcal I(u,v) \triangleq [a_u,a_v]\) of length $S_{v,u}$. Let $\mathcal I(i,j)$ and $\mathcal I(i',j')$ be two such intervals with $i<j$ and $i'<j'$, and assume, without loss of generality, that $a_i < a_{i'}$. Define the (signed) separation
\begin{equation}
Q_{i',j} \triangleq a_{i'} - a_j .
\label{eq:def_Q}
\end{equation}
Thus $Q_{i',j}=0$ means that the two intervals are adjacent, $Q_{i',j}>0$ means that they are disjoint, while $Q_{i',j}<0$ means that they overlap. The quantity $Q_{i',j}$ is useful because the same exponents involved in an equal-span relation may generate a second repeated span, as characterized next. Let us also define 
\[
W_{n_0}(r)\triangleq\#\{a_k: a_k\equiv r \pmod{n_0}\}.
\]
Thus, with the columns indexed by $r=0,\ldots,n_0-1$, $W_{n_0}(r)$ is the weight of the $r$-th column of $\mathbf H_s^{\mathsf T}$, and $\sum_{r=0}^{n_0-1}W_{n_0}(r)=\gamma$.

\begin{lemma}
\label{lem:adjacency_classes}
Given an exponents vector $\mathbf a$, assume that \eqref{eq:n2_equal_diff} holds for two distinct ordered pairs $(i,j)\neq(i',j')$, with $i<j$ and $i'<j'$. After exchanging the two pairs if necessary, assume that $a_i<a_{i'}$. Then, the corresponding entries can generate equal-span relations with at most two distinct span values: $S_{j,i}$ and, if $Q_{i',j}\neq0$, $S_{j,i}+Q_{i',j}$. When $Q_{i',j}=0$, only the span $S_{j,i}$ occurs.
\end{lemma}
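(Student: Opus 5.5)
The plan is to work only with the four exponents $a_i<a_j$ and $a_{i'}<a_{j'}$ involved in the equal-span relation $S_{j,i}=S_{j',i'}\triangleq s>0$, and to list all spans these entries can form among themselves. Since $a_i<a_{i'}$ and the two intervals have the same length $s$, we also get $a_j=a_i+s<a_{i'}+s=a_{j'}$. The four points are therefore determined by $a_i$, $s$ and the separation $Q\triangleq Q_{i',j}=a_{i'}-a_j$. In particular, $a_{i'}=a_i+s+Q$ and $a_{j'}=a_i+2s+Q$, with $Q>-s$ because $a_{i'}>a_i$.

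The key step is to enumerate the six pairwise differences of these four points:
\[
a_j-a_i=s,\quad a_{j'}-a_{i'}=s,\quad a_{i'}-a_j=Q,\quad a_{j'}-a_i=2s+Q,\quad a_{i'}-a_i=s+Q,\quad a_{j'}-a_j=s+Q.
\]
The pairs $(i,j),(i',j')$ give the span $s$. The "cross" pairs $(i,i')$ and $(j,j')$ always give the same value $s+Q=S_{j,i}+Q_{i',j}$, so they form a second equal-span relation. This is the familiar parallelogram (translation) structure: translating the interval $[a_i,a_j]$ by $s+Q$ gives $[a_{i'},a_{j'}]$, and this forces the equal differences $a_{i'}-a_i=a_{j'}-a_j$. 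The remaining differences, $Q$ and $2s+Q$, each occur only once among the four points. They are therefore not repeated spans generated by this configuration, so at most the two span values $s$ and $s+Q$ appear in equal-span relations.

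Next I treat the degenerate case $Q=0$, where $a_{i'}=a_j$ and the configuration has only three distinct entries $a_i<a_j<a_{j'}$. The cross pairs $(i,i')$ and $(j,j')$ then reduce to the pairs $(i,j)$ and $(j,j')=(i',j')$ already counted, both with span $s+0=s$. The extra relation therefore collapses onto the original one, and only $S_{j,i}$ occurs.

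For $Q\neq 0$, I also note that $s+Q$ is a positive span distinct from $s$. It is positive since $Q>-s$. It could fail to be a genuine second span only if the cross pairs coincided with the original pairs, which would require $a_{i'}=a_j$, i.e.\ $Q=0$. Hence there are exactly two distinct span values, $S_{j,i}$ and $S_{j,i}+Q_{i',j}$. The main obstacle is interpretive rather than computational: I must make precise that "generate" means equal-span relations formed among the entries $\{a_i,a_j,a_{i'},a_{j'}\}$ themselves. With that reading, the check that $Q$ and $2s+Q$ are not repeated within the configuration is immediate from the list above. Coincidences such as $Q=s$ would come from other entries of $\mathbf a$ and do not belong to this relation's class. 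The case $Q<0$ (overlap) needs no separate treatment, since the formulas hold for any sign of $Q$.
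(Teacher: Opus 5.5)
Your core step is the paper's own argument: the translation identity $a_{i'}-a_i=a_{j'}-a_j=S_{j,i}+Q_{i',j}$, plus the collapse of the two span values when $Q_{i',j}=0$, is the whole of the paper's proof. Your enumeration of all six differences goes further than the paper, which simply asserts that no other repeated span arises.

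However, one claim in that enumeration is false. You say that $Q$ and $2s+Q$ each occur only once among the four points. That fails for the span $|Q|$ in two cases. If $Q=s$, the points are $a_i, a_i+s, a_i+2s, a_i+3s$, and $a_{i'}-a_j=s$ repeats $S_{j,i}$. If $Q=-s/2$, you get $a_j-a_{i'}=s/2=s+Q$. Both coincidences are internal to $\{a_i,a_j,a_{i'},a_{j'}\}$, so your remark that such coincidences "would come from other entries of $\mathbf a$" is incorrect.

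The conclusion still holds, and the repair is short:
\begin{itemize}
\item $2s+Q$ is the strict maximum of the differences, so it never repeats.
\item If $Q<0$, then $|Q|<s$ because $Q>-s$.
\item Hence $|Q|$ can coincide only with $s$ (exactly when $Q=s$) or with $s+Q$ (exactly when $Q=-s/2$).
\end{itemize}
Both values are already among the two allowed spans, so no third value can appear. Replace the ``occurs only once'' sentence with this check and the proof is complete.
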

\begin{proof}
Let $S=S_{j,i}=S_{j',i'}$. Since $a_j=a_i+S$ and $a_{j'}=a_{i'}+S$, using $Q_{i',j}=a_{i'}-a_j$ gives
\[
a_{i'}-a_i=a_{j'}-a_j=S+Q_{i',j}.
\]
Hence, besides the original repeated span $S$, the only possible distinct repeated span generated by these entries is $S+Q_{i',j}$. When $Q_{i',j}=0$, the two span values coincide.
\end{proof}

\begin{lemma}
\label{lem:n2_survival}
Under periodic row elimination with period equal to the block length \(n_0=2\), a length-$4$ cycle with span \(S\) survives in the parity-check matrix of the resulting \ac{LDPC} convolutional code if and only if \(S \equiv 0 \pmod{2}\).
\end{lemma}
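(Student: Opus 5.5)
We will work directly with the row structure of \eqref{eq:formH_simplexcodes}. Row $t$ of $\widetilde{\mathbf H}$ is the coefficient vector of $x^t h(x)$, so its ones sit in columns $t+a_0,\dots,t+a_{\gamma-1}$. A length-$4$ cycle is a $2\times 2$ all-ones submatrix. We first show that, for rows $t<t'$ and columns $c<c'$, this happens exactly when there are indices with
\[
c=t+a_i=t'+a_{i'},\qquad c'=t+a_j=t'+a_{j'} .
\]
Hence $S_{j,i}=c'-c=S_{j',i'}=S$, and the row offset is $t'-t=a_i-a_{i'}$. By Lemma~\ref{lem:adjacency_classes} this offset is, up to sign, the other span $S+Q_{i',j}$ attached to the same four exponents. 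So every cycle has a span $S$, namely the column distance, and a row distance. By the symmetry in that lemma, the roles of ``column'' and ``row'' distance can be exchanged: the same equal-span relation can be read with $S$ as the row offset. The main task is to identify which of the two differences must be even.

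Next we formalize the elimination. Keeping one row every $n_0=2$ means retaining exactly the rows $x^{t}h(x)$ with $t$ of a fixed parity, say $t$ even. This is consistent with Remark~\ref{rem:decim}, since retained rows are shifts of $h$ by multiples of $n_0$, i.e. the blocks $\mathbf H_s^{\mathsf T}$ repeated along time. A cycle survives if and only if both of its rows are retained. Shifting the whole configuration by a multiple of $2$ keeps it inside the convolutional matrix, because the matrix is invariant under the block shift. Therefore the survival condition is only that the two rows have the same parity, i.e. $t'-t\equiv 0 \pmod 2$.

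The key step is to connect this row offset to the span $S$ in the statement. The lemma speaks of the cycle's span $S$ as the row-direction shift. In the equal-span relation $a_{i'}-a_i=a_{j'}-a_j$, this difference is exactly the vertical displacement between the two rows involved. We will therefore fix the convention that the span of a cycle is the difference $S$ for which the two rows are $x^{t}h$ and $x^{t+S}h$, with the columns then determined. With this convention, the cycle survives if and only if both $t$ and $t+S$ lie in $2\mathbb Z$ for some admissible $t$, which holds if and only if $S\equiv 0\pmod 2$.

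For the converse, suppose $S$ is odd. Every copy of the cycle, obtained by shifting by any $t$, uses rows $t$ and $t+S$, which have different parity. Hence at least one of the two rows is eliminated in every copy, and no copy survives. The expected obstacle is exactly this bookkeeping of which difference plays the role of the row shift. An equal-span pair produces two cycles (with row shifts $S$ and $S+Q_{i',j}$), and the statement must be applied separately to each. We will note this explicitly, appealing to Lemma~\ref{lem:adjacency_classes}.
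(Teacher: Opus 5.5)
Your proof is correct and follows the same route as the paper. You keep one parity class of rows, note that the two rows of a cycle with span $S$ differ in index by exactly $S$, and conclude that both rows are retained if and only if $S$ is even; your extra step, showing that one equal-span relation yields cycles with row offsets $S$ and $S+Q_{i',j}$ and fixing the span of a cycle as its row-index difference, makes explicit what the paper's short proof assumes and matches how the lemma is used in Theorem~\ref{thm:n2_cycle_survival}.
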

\begin{proof}
Under periodic row elimination with period \(n_0=2\), the rows of
\(\mathbf{H}\) are partitioned into even-indexed and odd-indexed rows, and only one class is retained. A length-$4$ cycle with span \(S\) involves two rows whose indexes differ by exactly \(S\). Both rows belong to the same parity class if and only if their index difference \(S\) is divisible by \(2\), i.e., \(S\equiv 0\pmod{2}\). If they belong to different parity classes, one row is removed by the elimination and the cycle is destroyed. If they belong to the same class, both rows are retained and the cycle survives.
\end{proof}

\begin{theorem}
\label{thm:n2_cycle_survival}
Given a parity-check polynomial $h(x)$ with exponents vector $\mathbf{a}$, suppose \eqref{eq:n2_equal_diff} holds for some indexes $i<j$ and $i'<j'$.
After row elimination with $n_0=2$:
\begin{enumerate}
    \item If $Q_{i',j}=0$ (adjacent intervals), the length-$4$ cycles induced by the considered configuration have span $S_{j,i}$ and survive if and only if $S_{j,i}$ is even.
    \item If $Q_{i',j}\neq 0$ (non-adjacent intervals), the length-$4$ cycles induced by the considered configuration can have spans $S_{j,i}$ and $S_{j,i}+Q_{i',j}$. The cycles associated with either span survive if and only if that span is even.
\end{enumerate}
\end{theorem}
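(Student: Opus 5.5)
The theorem is a direct combination of Lemma~\ref{lem:adjacency_classes} and Lemma~\ref{lem:n2_survival}, so the plan is to chain them. The first lemma identifies which span values the configuration can produce. The second gives the survival criterion for each span after elimination with period $n_0=2$. I would begin by fixing the configuration: two distinct ordered pairs $(i,j)\neq(i',j')$ with $S_{j,i}=S_{j',i'}=S>0$. After exchanging the pairs if needed, assume $a_i<a_{i'}$, so that $Q_{i',j}=a_{i'}-a_j$ is well defined as in \eqref{eq:def_Q}.

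The first step is to enumerate the equal-span relations, and hence the length-$4$ cycles, that the four exponents $a_i,a_j,a_{i'},a_{j'}$ induce in \eqref{eq:formH_simplexcodes}. By Lemma~\ref{lem:adjacency_classes}, the relations are $a_j-a_i=a_{j'}-a_{i'}=S$ and $a_{i'}-a_i=a_{j'}-a_j=S+Q_{i',j}$, and no other span occurs.
\begin{itemize}
\item If $Q_{i',j}=0$, the second span equals the first. Then $a_{i'}=a_j$ and the pairs share an exponent, so the only span is $S_{j,i}$.
\item If $Q_{i',j}\neq 0$, the exponents are four distinct values and the two spans $S_{j,i}$ and $S_{j,i}+Q_{i',j}$ can both appear.
\end{itemize}
I would check that the second relation is between two distinct ordered pairs: $(i,i')\neq(j,j')$ because $i\neq j$. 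It therefore indeed gives a $2\times2$ all-ones submatrix via \eqref{eq:n2_equal_diff}.

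The second step applies Lemma~\ref{lem:n2_survival} to each span separately. A cycle of span $S'$ uses two rows of $\widetilde{\mathbf H}$ whose indexes differ by $S'$. Both rows survive one-in-two row elimination exactly when $S'\equiv 0\pmod 2$. Setting $S'=S_{j,i}$ gives item~1. Setting $S'=S_{j,i}$ and then $S'=S_{j,i}+Q_{i',j}$ gives item~2. The two spans are handled independently, since the rows of the second cycle are offset by $S+Q_{i',j}$ and not by $S$.

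The main obstacle is making precise that the row offset of each cycle equals its span, so that Lemma~\ref{lem:n2_survival} really applies to the second span. In \eqref{eq:formH_simplexcodes} row $\ell$ has ones in columns $\ell+a_k$. The cycle from $a_j-a_i=a_{j'}-a_{i'}$ therefore consists of rows $\ell$ and $\ell+S$, columns $\ell+a_j$ and $\ell+a_{j'}$ (equivalently $\ell+S+a_i$ and $\ell+S+a_{i'}$). The cycle from the second relation is analogous with offset $S+Q_{i',j}$.

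I would also remark that after elimination the cycle persists in the convolutional matrix for every surviving choice of $\ell$. This holds because of time invariance and the condition in Remark~\ref{rem:decim}, which ensures the retained rows reproduce the same syndrome-former pattern. This completes both items.
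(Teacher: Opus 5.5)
Your proposal is correct and follows the paper's own proof: Lemma~\ref{lem:adjacency_classes} supplies the possible spans $S_{j,i}$ and $S_{j,i}+Q_{i',j}$, and Lemma~\ref{lem:n2_survival} is applied to each span separately. You also check explicitly that each cycle uses rows $\ell$ and $\ell+S'$ of \eqref{eq:formH_simplexcodes}, so its row offset equals its span; this is a useful detail that the paper leaves implicit.
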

\begin{proof}
By Lemma~\ref{lem:adjacency_classes}, the cycles induced by the considered configuration can have span $S_{j,i}$ and, when $Q_{i',j}\neq0$, $S_{j,i}+Q_{i',j}$. The survival criterion of Lemma~\ref{lem:n2_survival} applies to each span, giving the stated conclusions.
\end{proof}

\begin{corollary}
\label{cor:n2_compact}
Under the assumptions of Theorem~\ref{thm:n2_cycle_survival}, if \(S_{j,i}\) is odd and \(Q_{i',j}\) is even, then none of the length-$4$ cycles induced by \eqref{eq:n2_equal_diff} survives row elimination with \(n_0=2\).
\end{corollary}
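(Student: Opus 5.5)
The plan is to apply Theorem~\ref{thm:n2_cycle_survival} directly. The only computation needed is the parity of the candidate spans. Under the stated hypotheses, with $S_{j,i}$ odd and $Q_{i',j}$ even, I will show that every span that Lemma~\ref{lem:adjacency_classes} allows for the configuration is odd. Lemma~\ref{lem:n2_survival} then says each such cycle is destroyed by row elimination with $n_0=2$.

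First, consider the case $Q_{i',j}=0$. This value is even, so it falls under the hypothesis. By item~1 of Theorem~\ref{thm:n2_cycle_survival}, the induced length-$4$ cycles all have span $S_{j,i}$. They survive only if $S_{j,i}$ is even, which is excluded since $S_{j,i}$ is odd. Hence no cycle survives.

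Second, consider the case $Q_{i',j}\neq 0$, with $Q_{i',j}$ even. By item~2, the candidate spans are $S_{j,i}$ and $S_{j,i}+Q_{i',j}$. The first is odd by assumption. The second is odd plus even, hence odd. Since each span's cycles survive if and only if that span is even, neither family survives.

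The two cases together cover all configurations satisfying \eqref{eq:n2_equal_diff} under the stated parity hypotheses, so the corollary follows. There is no genuine obstacle here. The only point needing care is that the case $Q_{i',j}=0$ is handled by item~1 rather than item~2. The $Q_{i',j}=0$ case is consistent with the hypothesis, since zero is even. It needs no separate treatment beyond noting that only the span $S_{j,i}$ occurs.
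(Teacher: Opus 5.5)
Your proof is correct and matches the paper. The paper states this corollary without a separate proof, as an immediate consequence of Theorem~\ref{thm:n2_cycle_survival}, and your parity check that both candidate spans $S_{j,i}$ and $S_{j,i}+Q_{i',j}$ are odd (covering $Q_{i',j}=0$ via item~1) is exactly that deduction.
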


\begin{remark}
The previous results show that, when constructing an \ac{LDPC} convolutional code as in Remark \ref{rem:decim}, requiring \(\mathbf{a}\) to be a Golomb ruler is sufficient but not necessary to eliminate all length-\(4\) cycles.
\end{remark}

Let us recall a result from \cite{Cancellieri2015} on the free distance of convolutional codes, which easily extends to the \ac{LDPC} case. 

\begin{theorem}
\label{thm:basic_codeword}
The polynomial \(v(x)=h^{*}(x) \) is always a codeword of any rate-\(1/2\) convolutional code constructed as described in Remark \ref{rem:decim}. Therefore, the free distance satisfies
\begin{equation}
d \le \wt[h(x)].
\end{equation}
\end{theorem}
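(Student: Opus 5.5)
The plan is to verify directly that $v(x)=h^{*}(x)$ satisfies every retained parity-check equation, by translating the row checks into a statement about the coefficients of a polynomial product. Write $k=L-1=a_{\gamma-1}$ for the degree of $h(x)$. In the cyclic matrix \eqref{eq:formH_simplexcodes}, the row shifted by $s$ positions imposes on a sequence $v(x)=\sum_m v_m x^m$ the check
\[
\sum_{j=0}^{k} h_j\, v_{s+j}=0 .
\]
The left-hand side is exactly the coefficient of $x^{s}$ in $v(x)\,h(x^{-1})$. After the periodic row elimination of Remark~\ref{rem:decim} with $n_0=2$, the retained rows are those whose shifts $s$ lie in a single residue class modulo $2$, repeated over all times $t\ge 0$. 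Hence $v(x)$ is a code sequence if and only if all coefficients of $v(x)h(x^{-1})$ at exponents in that residue class vanish. Rows that only partially overlap the start of the sequence need no separate treatment, because $v_m=0$ for $m<0$ and the product formula already covers those shifts.

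Next I substitute $v(x)=h^{*}(x)=x^{k}h(x^{-1})$. This gives $v(x)h(x^{-1})=x^{k}\,h(x^{-1})^2$. Over $\Ftwo$ the Frobenius identity $h(y)^2=h(y^2)$ holds, since all cross terms appear twice and cancel. Therefore
\[
v(x)h(x^{-1})=x^{k}h(x^{-2})=\sum_{j} h_j\,x^{k-2j}.
\]
All nonzero coefficients of this product sit at exponents congruent to $k$ modulo $2$. So every check whose shift $s$ has parity opposite to $k$ is automatically satisfied.

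The step I expect to be the main obstacle is confirming that the retained class in Remark~\ref{rem:decim} is indeed the class $s\not\equiv k \pmod 2$, once $v(x)$ is indexed by the block structure $v(x)=\sum_t\sum_r v_t^{(r)}x^{2t+r}$ so that the syndrome former fits whole blocks. I would check this against Example~\ref{ex:ex1}, where $k=5$, using the layout of $\mathbf H_s^{\mathsf T}$ given by \eqref{eq:mod-decomp}. If the indexing convention instead retains the class $s\equiv k \pmod 2$, I would use the shifted sequence $x\,h^{*}(x)$. Its product is $x^{k+1}h(x^{-2})$, whose support lies in the opposite parity class, so it satisfies every retained check and has the same weight. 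Since the codes considered are time-invariant, a shift of a codeword by a multiple of $n_0=2$ is again a codeword, so whichever of the two sequences is appropriate yields a valid nonzero finite-weight codeword.

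Finally, the sequence found is nonzero and has weight $\wt[h^{*}(x)]=\wt[h(x)]=\gamma$, because $h_0=h_k=1$ and reciprocation only reverses the coefficient vector. The free distance is the minimum weight over all nonzero finite-weight code sequences, so $d\le \wt[h(x)]$.
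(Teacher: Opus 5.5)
Your proof is correct, but it does not follow the paper, which gives no proof of its own and simply imports the result from \cite{Cancellieri2015}. The justification behind the cited result is the classical rate-$1/2$ component swap on the sub-polynomials of \eqref{eq:n2_decomposition}: exchanging and reversing them gives a sequence annihilated by the syndrome former, because the two cross products cancel. In interleaved form that sequence is $h^{*}(x)$. Your identity $h(x^{-1})^2=h(x^{-2})$ performs the same cancellation on the whole interleaved polynomial in one line. Its main benefit is that it exposes an alignment issue the statement glosses over.

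Your hedge can in fact be resolved. In the paper's layout, column $r$ of $\mathbf H_s^{\mathsf T}$ collects the exponents $\equiv r \pmod{n_0}$ (Example~\ref{ex:ex1}, Tables~\ref{tab:examples} and~\ref{tab:large_golomb}). So each retained row starts at a block boundary, and the retained shifts are $s\equiv 0\pmod 2$.
\begin{itemize}
\item If $\deg h$ is odd, $h^{*}(x)$ itself is a codeword.
\item If $\deg h$ is even, the check at $s=\deg h$ is violated, and $x\,h^{*}(x)$ is the codeword instead. Examples are $h(x)=1+x+x^4$ and the degree-$372$ ruler of Example~\ref{ex:large_golomb}.
\end{itemize}
So your case split is genuinely needed. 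The word ``always'' in the statement is accurate only if rows are aligned to end at a block boundary. The bound $d\le\wt[h(x)]$ holds either way.

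One small correction: time invariance is not what justifies $x\,h^{*}(x)$, since that is a one-position shift, not a shift by a multiple of $n_0$. You do not need it, because your product computation already verifies $x\,h^{*}(x)$ directly.

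Finally, your squaring argument also extends to Theorem~\ref{thm:distance_bound_general}. Let $h_{ij}(x)$ be the part of $h(x)$ on exponents $\equiv i,j \pmod{n_0}$. Then $x^{c}h_{ij}(x^{-1})$, with $c\equiv i+j \pmod{n_0}$ and $c\ge \deg h_{ij}$, is a codeword of weight $\sigma_{i,j}$ by the same computation.
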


\section{LDPC convolutional codes  with $n_0\geq 2$}
\label{sec:high_rate}
As noted in Remark~\ref{rem:decim}, starting from a cyclic code,  one can obtain a convolutional code with rate \(R=\frac{n_0-1}{n_0}\) by periodically eliminating \(n_0-1\) consecutive rows out of every block of \(n_0\) rows.

\begin{remark}
We focus on nondegenerate constructions in which all $n_0$ parity-check sub-polynomials are nonzero. If one sub-polynomial is zero, the corresponding symbol position is unconstrained, and a weight-$1$ codeword exists. Hence, the resulting convolutional code has free distance $d=1$.
\end{remark}

\begin{theorem}
\label{thm:adjacent_modn0}
Let \(h(x)\) be a parity-check polynomial, and let \(\mathbf a\) be the corresponding exponents vector.  Assume that there exist indexes \(i<j\) and \(i'<j'\) such that
\[
S_{j,i} = S_{j',i'} > 0, \qquad Q_{i',j}=0.
\]
Then, the corresponding length-\(4\) cycle in the parent parity-check matrix \eqref{eq:formH_simplexcodes} survives in the parity-check matrix of the convolutional code obtained by periodic row elimination with period \(n_0\) if and only if
\begin{equation}
    S_{j,i} \equiv 0 \pmod{n_0}.
    \label{eq:condizionemod}
\end{equation}
\end{theorem}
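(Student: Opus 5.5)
The plan is to mirror the proof of Theorem~\ref{thm:n2_cycle_survival}, replacing Lemma~\ref{lem:n2_survival} by its mod-$n_0$ analogue. The first step is to identify the rows and columns of the parent matrix \eqref{eq:formH_simplexcodes} that carry the length-$4$ cycle. Row $t$ of $\widetilde{\mathbf H}$ has ones exactly in columns $t+a_k$, $k=0,\ldots,\gamma-1$. A $2\times 2$ all-ones submatrix built from the relation $a_j-a_i=a_{j'}-a_{i'}=S$ therefore uses two rows $t$ and $t+S$. The cycle is realized by the columns $t+a_j=(t+S)+a_i$ and $t+a_{j'}=(t+S)+a_{i'}$, since $a_j-a_i=a_{j'}-a_{i'}$.

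Next we use $Q_{i',j}=0$, i.e.\ $a_{i'}=a_j$. By Lemma~\ref{lem:adjacency_classes}, the configuration then produces only the span $S=S_{j,i}$, because the second candidate span $S+Q_{i',j}$ coincides with it. So there is exactly one row difference to track: every cycle induced by the configuration lives on a row pair $(t,t+S)$. This is where adjacency is needed; without it, a second span $S+Q_{i',j}$ would have to be examined separately, as in item 2 of Theorem~\ref{thm:n2_cycle_survival}.

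The survival criterion is the key step. Periodic row elimination with period $n_0$ keeps exactly one residue class of row indexes modulo $n_0$, say those with $t\equiv c\pmod{n_0}$. The cycle survives if and only if both rows $t$ and $t+S$ are retained, which requires $t\equiv c$ and $t+S\equiv c \pmod{n_0}$, i.e.\ $S\equiv 0\pmod{n_0}$. Column deletion plays no role, because rows are removed and all columns are kept.

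There are two directions to check.
\begin{itemize}
\item If $S\equiv0\pmod{n_0}$, choose $t$ in the retained class. The row pair $(t,t+S)$ then survives intact within the finite parent matrix, provided $t+S\le N-k-1$. This holds because Remark~\ref{rem:decim} assumes $N-k\ge n_0(\lceil L/n_0\rceil-1)+1$ and $S\le L-1$. The cycle then repeats along time in the convolutional matrix.
\item If $S\not\equiv 0\pmod{n_0}$, every occurrence of the cycle has one of its two rows deleted, so it is destroyed.
\end{itemize}

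The main obstacle is making ``the corresponding cycle'' precise. The two columns must be shown to be the same pair in the convolutional matrix, i.e.\ their entries in the retained rows are exactly those of $\mathbf H_s^{\mathsf T}$ repeated. The cleanest route is to phrase the argument directly in the semi-infinite $\mathbf H$ obtained by keeping rows $t\equiv c$. There the ones of row $c+n_0\tau$ sit at columns $c+n_0\tau+a_k$, and a length-$4$ cycle corresponds precisely to two retained rows whose difference equals a repeated span. This gives the equivalence without boundary issues.
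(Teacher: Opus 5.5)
Your proposal is correct and follows the paper's argument. Lemma~\ref{lem:adjacency_classes} collapses the adjacent configuration to the single span $S_{j,i}$, and since periodic elimination retains one residue class of rows modulo $n_0$, the two rows at distance $S_{j,i}$ both survive if and only if $S_{j,i}\equiv 0 \pmod{n_0}$. Your explicit row/column bookkeeping and the semi-infinite formulation just make precise the paper's one-line ``rows remain aligned'' step; the finite-matrix boundary check is airtight only when the retained class contains row $0$, but the semi-infinite version you propose avoids that issue anyway.
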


\begin{proof}
By Lemma~\ref{lem:adjacency_classes}, adjacent intervals yield a single equal-span relation with span \(S_{j,i}\). Under period-\(n_0\) elimination, two rows remain aligned across periods if and only if their row-index difference is a multiple of \(n_0\), i.e., \(S_{j,i} \equiv 0 \pmod{n_0}\). If \(S_{j,i} \not\equiv 0 \pmod{n_0}\), the cycle cannot occur.
\end{proof}

\begin{theorem}
\label{thm:separated_modn0}
Let $h(x)$ be a parity-check polynomial, and let $\mathbf a$ be the corresponding exponents vector. Assume that there exist indexes $i<j$ and $i'<j'$ such that
\[
S_{j,i}=S_{j',i'}>0,\qquad Q_{i',j}\neq0.
\]
Then, by Lemma~\ref{lem:adjacency_classes}, the length-$4$ cycles induced by the considered configuration can have spans $S_{j,i}$ and $S_{j,i}+Q_{i',j}$. After periodic row elimination with period $n_0$, a cycle survives if and only if its span is a multiple of $n_0$. Therefore, cycles with span $S_{j,i}$ survive when
\[
S_{j,i}\equiv0\pmod{n_0},
\]
whereas those with span $S_{j,i}+Q_{i',j}$ survive when
\[
S_{j,i}+Q_{i',j}\equiv0\pmod{n_0}.
\]
\end{theorem}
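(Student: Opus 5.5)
The proof follows the same two-step pattern as Theorem~\ref{thm:n2_cycle_survival}. First Lemma~\ref{lem:adjacency_classes} identifies which spans can occur. Then a period-$n_0$ analogue of Lemma~\ref{lem:n2_survival} decides which of them survive the row elimination.

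\emph{Step 1: the possible spans.} From $S_{j,i}=S_{j',i'}=S$ and $Q_{i',j}=a_{i'}-a_j$, Lemma~\ref{lem:adjacency_classes} gives
\[
a_{i'}-a_i=a_{j'}-a_j=S+Q_{i',j}.
\]
Hence the four exponents $a_i,a_j,a_{i'},a_{j'}$ carry exactly two equal-span relations, $S_{j,i}=S_{j',i'}$ and $S_{i',i}=S_{j',j}$. Because $Q_{i',j}\neq 0$, these two relations are distinct, so every length-$4$ cycle induced by the configuration has span either $S$ or $S+Q_{i',j}$.

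\emph{Step 2: survival of a single cycle.} Consider a cycle with span $T$ in \eqref{eq:formH_simplexcodes}. Row $\ell$ holds $h(x)$ shifted by $\ell$ positions, so it has ones in the columns $\ell+a_k$. Suppose $a_v-a_u=a_{v'}-a_{u'}=T$. Then rows $\ell$ and $\ell+T$ meet in columns $\ell+a_v=\ell+T+a_u$ and $\ell+a_{v'}=\ell+T+a_{u'}$, and these four ones form the cycle. Periodic row elimination keeps exactly the rows with $\ell\equiv c \pmod{n_0}$ for one fixed residue $c$.
\begin{itemize}
\item If $T\equiv 0\pmod{n_0}$, take $\ell\equiv c$. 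Then both rows are retained, and by time invariance the cycle recurs in every period.
\item If $T\not\equiv 0\pmod{n_0}$, rows $\ell$ and $\ell+T$ never lie in the same residue class, so at least one of them is removed. The cycle is destroyed for every shift $\ell$.
\end{itemize}
This is exactly the argument of Theorem~\ref{thm:adjacent_modn0}, with parity replaced by the residue modulo $n_0$.

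\emph{Step 3: combine.} Applying Step~2 with $T=S_{j,i}$ and with $T=S_{j,i}+Q_{i',j}$ gives the two stated conditions. The two spans are treated independently, since each one determines its own pair of rows.

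\emph{Main obstacle.} The delicate point is Step~2, namely that a span with $T\not\equiv 0 \pmod{n_0}$ cannot reappear as some other cycle in the decimated matrix. This has to be argued directly: any $2\times2$ all-ones submatrix of the convolutional matrix is still such a submatrix of the parent matrix. Its row difference is therefore a span $T$ with $T\equiv0\pmod{n_0}$, because both rows are retained. So every surviving cycle is one of those identified in Step~2, and no new cycles are created.
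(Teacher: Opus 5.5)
Your proposal is correct and follows the same route as the paper's proof. Lemma~\ref{lem:adjacency_classes} gives the two possible spans $S_{j,i}$ and $S_{j,i}+Q_{i',j}$, and the period-$n_0$ row-elimination rule (rows $\ell$ and $\ell+T$ are both retained if and only if $T\equiv 0 \pmod{n_0}$) is then applied to each span separately; your Step~2 and closing remark just write out the residue-class argument that the paper leaves implicit, as in Lemma~\ref{lem:n2_survival} and Theorem~\ref{thm:adjacent_modn0}.
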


\begin{proof}
The possible span values follow from Lemma~\ref{lem:adjacency_classes}. The result then follows from the period-$n_0$ row-elimination rule.
\end{proof}

\begin{corollary}
\label{cor:separated_modn0_compact}
Under the hypotheses of Theorem~\ref{thm:separated_modn0}, no length-$4$ cycle induced by the considered repeated-difference configuration survives if and only if
\begin{equation}
S_{j,i}\not\equiv0\pmod{n_0},\qquad
Q_{i',j}\not\equiv-S_{j,i}\pmod{n_0}.
\label{eq:cor_compact_modn0}
\end{equation}
\end{corollary}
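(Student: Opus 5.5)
The corollary follows from Theorem~\ref{thm:separated_modn0} by negating its survival conditions. That theorem fixes the only two span values that cycles from the configuration $S_{j,i}=S_{j',i'}$, $Q_{i',j}\neq 0$ can have, namely $S_{j,i}$ and $S_{j,i}+Q_{i',j}$, which come from Lemma~\ref{lem:adjacency_classes}. It also states that a cycle with a given span survives period-$n_0$ elimination exactly when that span is $\equiv 0 \pmod{n_0}$. The statement ``no induced cycle survives'' is therefore the conjunction of two statements: no cycle with span $S_{j,i}$ survives, and no cycle with span $S_{j,i}+Q_{i',j}$ survives.

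For the first condition, note that a cycle with span $S_{j,i}$ always exists in the parent matrix, because the equal-span relation~\eqref{eq:n2_equal_diff} holds by hypothesis. It therefore fails to survive if and only if $S_{j,i}\not\equiv 0\pmod{n_0}$. For the second condition, the identity $a_{i'}-a_i=a_{j'}-a_j=S_{j,i}+Q_{i',j}$ from the proof of Lemma~\ref{lem:adjacency_classes} shows that the pairs $(i,i')$ and $(j,j')$ are two ordered pairs with equal positive span. Since $Q_{i',j}\neq 0$, these pairs are distinct from $(i,j)$ and $(i',j')$, so a genuine cycle of span $S_{j,i}+Q_{i',j}$ exists in~\eqref{eq:formH_simplexcodes}. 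It does not survive if and only if $S_{j,i}+Q_{i',j}\not\equiv 0\pmod{n_0}$, which rearranges to $Q_{i',j}\not\equiv -S_{j,i}\pmod{n_0}$. Combining the two conditions gives~\eqref{eq:cor_compact_modn0}. Both directions of the ``if and only if'' hold: sufficiency because every induced cycle has one of the two spans, and necessity because violating either condition leaves a cycle of that span alive.

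The main point requiring care is necessity for the second span. I must confirm that this span is actually realized, and not merely allowed, by the configuration. Otherwise the condition $Q_{i',j}\not\equiv -S_{j,i}$ would only be sufficient, not necessary. The identity above settles this: it is positive because $a_i<a_{i'}$, and its two pairs are distinct, so the four exponents $a_i,a_j,a_{i'},a_{j'}$ always generate both cycles.

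A further subtlety is the case where $a_{i'}<a_j$, that is, overlapping intervals with $Q_{i',j}<0$. There I would check that $(i,i')$ and $(j,j')$ are still valid index pairs, with $i<i'$ and $j<j'$. This holds because $a_i<a_{i'}$ and $a_j=a_i+S<a_{i'}+S=a_{j'}$, so the argument goes through for overlapping intervals as well.
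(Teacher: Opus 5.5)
Your proposal is correct and follows the paper's own route. The paper gives no separate proof of this corollary; it states it as the direct negation of the two survival conditions of Theorem~\ref{thm:separated_modn0}, with $S_{j,i}+Q_{i',j}\not\equiv 0$ rewritten as $Q_{i',j}\not\equiv -S_{j,i}\pmod{n_0}$, exactly as you do. Your extra check that the span $S_{j,i}+Q_{i',j}$ is actually realized by the pairs $(i,i')$ and $(j,j')$ is a worthwhile addition, because the paper only says the cycles ``can have'' this span, and the ``only if'' direction needs the cycle to exist (existence needs only $(i,i')\neq(j,j')$; the hypothesis $Q_{i',j}\neq 0$ just makes it a relation distinct from the first one).
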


Let us show that a Golomb-ruler construction removes all length-$4$ cycles, but may still produce a degenerate code for some values of $n_0$.

\begin{example}
\label{ex:golo_base}
Consider the following exponents vector with $\gamma=7$ and $L=26$, whose entries form a Golomb ruler:
\[
\mathbf a=(0,1,4,10,18,23,25).
\]
 All differences are distinct; hence, no length-4 cycle exists in the convolutional parity-check matrix for any \(n_0\). However, for $n_0=5$, $\mathbf H_s^{\mathsf T}$ has an all-zero column, since $W_5(2)=0$. The  corresponding rate-$4/5$ code is therefore degenerate.

\end{example}

Let us now show that a small perturbation in the assumed entries may introduce repeated differences, while the resulting length-$4$ cycles can still be removed by periodic row elimination.

    \begin{example}
\label{ex:adjacent_equal}
Replace \(10\) with \(12\) in Example~\ref{ex:golo_base}:
\[
\mathbf a=(0,1,4,12,18,23,25).
\]
A repeated difference appears:
\[
S_{3,1}=a_3-a_1=12-1=11, \; S_{5,3}=a_5-a_3=23-12=11,
\]
hence \(\mathbf a\) is not a Golomb ruler. Moreover, \(Q_{3,3}=a_3-a_3=0,\) so the corresponding intervals are adjacent. Therefore, by Theorem~\ref{thm:adjacent_modn0}, a length-$4$ cycle survives after periodic row elimination with period \(n_0\) if and only if \(S_{3,1} \equiv 0 \pmod{n_0}.\) 
Since \(S_{3,1}=11 \not\equiv 0 \pmod{n_0}\) for \(n_0\in\{2,3,4,5\},\) no length-$4$ cycle survives in the convolutional parity-check matrix for any of the considered periods.
\end{example}

Let us then show how adjacent and non-adjacent repeated differences lead to different  conditions for cycle survival.

\begin{example}
\label{ex:separated_equal}
Replace $10$ with $8$ in Example~\ref{ex:golo_base}:
\[
\mathbf a=(0,1,4,8,18,23,25).
\]
The adjacent repeated difference
\[
S_{2,0}=S_{3,2}=4,\qquad Q_{2,2}=0,
\]
induces a cycle that survives for $n_0\in\{2,4\}$. The non-adjacent repeated difference
\[
S_{3,1}=S_{6,4}=7,\qquad Q_{4,3}=10,
\]
induces  cycles with spans $7$ and $17$, neither of which survives for any $n_0\in\{2,3,4\}$. Hence, this configuration produces no additional surviving cycle. Moreover, all $W_{n_0}(r)$ are nonzero for $2\leq n_0\leq4$; so, the corresponding codes are nondegenerate.
\end{example}

\subsection{Free distance properties}

We now extend the upper bound on the free distance $d$ provided in Theorem \ref{thm:basic_codeword} to convolutional codes having rate $R=\frac{n_0-1}{n_0}$, with $n_0>2$. Consider the syndrome-former submatrix $\mathbf H_s^{\mathsf T}$ associated with a convolutional code constructed as in Remark~\ref{rem:decim}. Consider, for any pair $0\leq i<j\leq n_0-1$, code sequences for which $v_t^{(r)}=0$ for every $r\notin\{i,j\}$ and every $t$. The parity-check equations then involve only the $i$-th and $j$-th columns of $\mathbf H_s^{\mathsf T}$, which define a rate-$1/2$ convolutional code.

Let us define
\begin{equation}
\sigma_{i,j}\triangleq W_{n_0}(i)+W_{n_0}(j),
\qquad 0\leq i<j\leq n_0-1.
\label{eq:pairwise_sums}
\end{equation}

\begin{theorem}
\label{thm:distance_bound_general}
Let $\mathcal C$ be a convolutional code constructed as in Remark~\ref{rem:decim}, with rate $R=\frac{n_0-1}{n_0}$, $n_0>2$, and let $\mathbf H_s^{\mathsf T}$ be its syndrome-former submatrix. Then, the free distance $d$ of $\mathcal C$ satisfies
\begin{equation}
d\leq\min_{0\leq i<j\leq n_0-1}\sigma_{i,j}.
\label{eq:distance_bound_general}
\end{equation}
\end{theorem}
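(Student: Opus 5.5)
For each pair of columns I will exhibit an explicit nonzero finite-weight codeword of weight exactly $\sigma_{i,j}$; minimizing over pairs then gives \eqref{eq:distance_bound_general}. Fix $0\le i<j\le n_0-1$. As observed before the statement, restricting to sequences supported only on symbol positions $i$ and $j$ leaves a rate-$1/2$ convolutional code. Its syndrome former consists of columns $i$ and $j$ of $\mathbf H_s^{\mathsf T}$. Let $f_i(D)$ and $f_j(D)$ be the polynomials in the delay variable $D$ whose coefficients are the entries of these two columns across $H_0^{\T},\dots,H_{m_s}^{\T}$. Their weights are $W_{n_0}(i)$ and $W_{n_0}(j)$, because each exponent $a_k\equiv r \pmod{n_0}$ contributes exactly one $1$ to column $r$. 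In polynomial form the single parity constraint reads
\[
v^{(i)}(D)f_i(D)+v^{(j)}(D)f_j(D)=0 .
\]

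The candidate codeword is the cross choice $v^{(i)}(D)=f_j(D)$, $v^{(j)}(D)=f_i(D)$, with all other components zero. It satisfies the constraint trivially, since $f_jf_i+f_if_j=0$ over $\Ftwo$. This is the same mechanism behind $v(x)=h^*(x)$ in Theorem~\ref{thm:basic_codeword}. Indeed, for $n_0=2$ one can check that interleaving the reciprocal sub-polynomials reproduces $h^*(x)$, so I will present the general case as the natural extension of that theorem. The weight of this sequence is $\wt[f_j]+\wt[f_i]=\sigma_{i,j}$.

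Two points need care. First, nonzeroness: under the nondegeneracy assumption both sub-polynomials are nonzero. If one of them, say $f_i$, were zero, the sequence with a single $1$ in position $i$ would already be a codeword of weight $1\le\sigma_{i,j}$, so the bound holds in either case.

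Second, and this is the main obstacle, the exact bookkeeping of the delay/time-reversal convention. In \eqref{eq:time_syndrome}, $\mathbf v_{t-\ell}$ multiplies $H_\ell^{\T}$, and periodic row elimination in Remark~\ref{rem:decim} fixes which exponent residue lands in which column and at which delay. Since $a_{\gamma-1}=L-1$ is reversed in passing from $h$ to $h^*$, I must verify that the syndrome equation is indeed the polynomial product above, possibly with reciprocals $f_r^*$. I will handle this by writing the syndrome at time $t$ as the coefficient of $D^t$ in $\sum_r v^{(r)}(D)\tilde f_r(D)$ for suitably defined $\tilde f_r$, either the $f_r$ or their reciprocals. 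Whichever convention holds, the cross-product codeword $(\tilde f_j,\tilde f_i)$ works, and reciprocation preserves weight, so the weight stays $\sigma_{i,j}$. Taking the minimum over all pairs $(i,j)$ completes the argument.
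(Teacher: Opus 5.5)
Your proposal is correct and takes the same route as the paper. Both arguments restrict to symbol positions $i$ and $j$, take a weight-$\sigma_{i,j}$ codeword of the resulting rate-$1/2$ code, pad the other $n_0-2$ coordinates with zeros, and minimize over pairs. The only difference is that the paper obtains that codeword by citing \cite{Cancellieri2015}, whereas you write it down explicitly as the cross pair $(f_j,f_i)$, which makes the argument self-contained.

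The convention issue you flag is not actually an obstacle. With $f_r(D)=\sum_{\ell=0}^{m_s}(H_\ell^{\T})_r D^\ell$, relation \eqref{eq:time_syndrome} is exactly the coefficient of $D^t$ in $\sum_r v^{(r)}(D)f_r(D)$, so no reciprocals are needed. Nonzeroness only requires that columns $i$ and $j$ are not both zero, which the nondegeneracy assumption guarantees.
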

\begin{proof}
Fix a pair $0\leq i<j\leq n_0-1$. Restricting the code sequences to $v_t^{(r)}=0$ for all $r\notin\{i,j\}$ reduces the parity-check equations of $\mathcal C$ to those defined by the $i$-th and $j$-th columns of $\mathbf H_s^{\mathsf T}$. These two columns define a rate-$1/2$ convolutional code. By the construction in~\cite{Cancellieri2015}, this code contains a nonzero codeword of Hamming weight
\[
W_{n_0}(i)+W_{n_0}(j)=\sigma_{i,j}.
\]
Setting the remaining $n_0-2$ coordinates identically to zero yields a codeword of $\mathcal C$ with the same Hamming weight. Hence,
\(d\leq\sigma_{i,j}\). Since this holds for every pair $0\leq i<j\leq n_0-1$, minimizing over all pairs gives \eqref{eq:distance_bound_general}.
\end{proof}

\begin{table*}[t!]
\centering
\renewcommand{\arraystretch}{1.15}
\setlength{\tabcolsep}{5pt}
\caption{Examples~\ref{ex:golo_base}--\ref{ex:separated_equal}: syndrome-former constraint length, column-weight vectors of \(\mathbf H_s^{\mathsf T}\), and corresponding upper bounds on \(d\).}
\label{tab:examples}
\begin{tabular}{l c c c c c c c c c}
\hline
\multirow{2}{*}{Ref./Exponents vector} &
\multirow{2}{*}{\(R\)} &
\multirow{2}{*}{\(\nu_s\)} &
\multicolumn{5}{c}{Column weights \(W_{n_0}(r)\) of \(\mathbf H_s^{\mathsf T}\)} &
\multirow{2}{*}{Bound on \(d\)} &
\multirow{2}{*}{Estimated \(d\)}
\\
\cline{4-8}
& & & \(r=0\) & \(r=1\) & \(r=2\) & \(r=3\) & \(r=4\) & & \\
\hline

\multirow{4}{*}{\parbox{5.2cm}{\textbf{Example \ref{ex:golo_base}}\\
\(\mathbf a=(0,1,4,10,18,23,25)\)\\
(Golomb ruler)}}
& \(1/2\) & 26 & 4 & 3 &  &  &  & \(d\le 7\) & \(7\) \\
& \(2/3\) & 27 & 2 & 4 & 1 &  &  & \(d\le 3\) & \(3\) \\
& \(3/4\) & 28 & 2 & 2 & 2 & 1 &  & \(d\le 3\) & \(3\) \\
& \(4/5\) & 30 & \multicolumn{5}{c}{\emph{degenerate}} &  &  \\
\hline

\multirow{4}{*}{\parbox{5.2cm}{\textbf{Example \ref{ex:adjacent_equal}}\\
\(\mathbf a=(0,1,4,12,18,23,25)\)}}
& \(1/2\) & 26 & 4 & 3 &  &  &  & \(d\le 7\) & \(7\) \\
& \(2/3\) & 27 & 3 & 3 & 1 &  &  & \(d\le 4\) & \(4\) \\
& \(3/4\) & 28 & 3 & 2 & 1 & 1 &  & \(d\le 2\) & \(2\) \\
& \(4/5\) & 30 & 2 & 1 & 1 & 2 & 1 & \(d\le 2\) & \(2\) \\
\hline

\multirow{4}{*}{\parbox{5.2cm}{\textbf{Example \ref{ex:separated_equal}}\\
\(\mathbf a=(0,1,4,8,18,23,25)\)}}
& \(1/2\)\textsuperscript{\(\ast\)} & 26 & 4 & 3 &  &  &  & \(d\le 7\) & \(7\) \\
& \(2/3\) & 27 & 2 & 3 & 2 &  &  & \(d\le 4\) & \(4\) \\
& \(3/4\)\textsuperscript{\(\ast\)} & 28 & 3 & 2 & 1 & 1 &  & \(d\le 2\) & \(2\) \\
& \(4/5\) & 30 & \multicolumn{5}{c}{\emph{degenerate}} &  &  \\
\hline
\end{tabular}
\vspace{1mm}
\parbox{\textwidth}{\footnotesize
\textsuperscript{$\ast$}The parity-check matrix contains length-$4$ cycles.
}
\end{table*}

\begin{table*}[t!]
\centering
\renewcommand{\arraystretch}{1.15}
\setlength{\tabcolsep}{6pt}
\caption{Performance of the optimal $23$-entry Golomb ruler of length $372$ (Example~\ref{ex:large_golomb}).}
\label{tab:large_golomb}

\begin{tabular}{c c cccccc cccc}
\hline
\multirow{2}{*}{Rate \(R\)} &
\multirow{2}{*}{\(\nu_s\)} &
\multicolumn{6}{c}{Column weights \(W_{n_0}(r)\)} &
\multirow{2}{*}{Bound on \(d\)} &
\multirow{2}{*}{Estimated \(d\)} &
\multicolumn{2}{c}{Asymptotic coding gain} \\
\cline{3-8} \cline{11-12}
& & \(r=0\) & \(r=1\) & \(r=2\) & \(r=3\) & \(r=4\) & \(r=5\)
& & & \(G_\infty\) & \(G_\infty\) [dB] \\
\hline

\(1/2\) & 374 & 11 & 12 &   &   &   &   & \(d\le23\) & 23 & 11.5 & 10.6 \\
\(2/3\) & 375 & 11 & 8  & 4 &   &   &   & \(d\le12\) & 12 & 8.0  & 9.0  \\
\(3/4\) & 376 & 5  & 4  & 6 & 8 &   &   & \(d\le9\)  & 9  & 6.75  & 8.3  \\
\(4/5\) & 375 & 4  & 8  & 4 & 2 & 5 &   & \(d\le6\)  & 6  & 4.8  & 6.8  \\
\(5/6\) & 378 & 7  & 6  & 2 & 4 & 2 & 2 & \(d\le4\)  & 4  & 3.33  & 5.22  \\
\hline
\end{tabular}
\end{table*}

For fixed polynomial weight $\gamma$, the column weights satisfy $\sum_r W_{n_0}(r)=\gamma$. Hence, the minimum pairwise sum in \eqref{eq:distance_bound_general} is favored by distributing the polynomial exponents as uniformly as possible among the residue classes modulo $n_0$.

Table~\ref{tab:examples} summarizes Examples~\ref{ex:golo_base}--\ref{ex:separated_equal}. Slight perturbations in the exponents vectors can avoid null columns without introducing harmful short cycles, and the free distances estimated through the tool in~\cite{MacKayDist} attain the bound in all nondegenerate cases.

\begin{example}
\label{ex:large_golomb}
An optimal $23$-entry Golomb ruler of length $372$ is given by
\begin{align*}
\mathbf a=\bigl(&
0,3,7,17,61,66,91,99,114,159,171,199,\\ \nonumber
&
200,226,235,246,277,316,329,348,350,366,372
\bigr).
\end{align*}
Table~\ref{tab:large_golomb} reports the corresponding code parameters. Since all column weights are nonzero for $2\leq n_0\leq6$, all the corresponding constructions are nondegenerate. Moreover, because \(\mathbf a\) is a Golomb ruler, the parity-check matrices of the resulting \ac{LDPC} convolutional codes contain no length-$4$ cycles. The estimated free distance coincides with the bound for all considered rates. 
\end{example}

\section{Conclusion}\label{sec:concl}

We have investigated the construction of high-rate \ac{LDPC} convolutional codes obtained by periodic row elimination from the parity-check matrix of cyclic codes, yielding a family of convolutional codes with rate \(R=(n_0-1)/n_0\). We have characterized the survival of length-$4$ cycles in the resulting parity-check matrices, showing that the presence of short cycles can be predicted directly from the exponents vector associated with the parity-check polynomial.

We have also analyzed the role of the column-weight distribution of the
syndrome-former sub-matrix. In particular, we have derived explicit upper bounds on the free distance in terms of pairwise column-weight sums, showing that unbalanced distributions limit the achievable free distance. This provides a simple design criterion based on promoting uniformity of the column weights. The considered examples show that Golomb rulers guarantee the absence of length-$4$ cycles, but may produce all-zero columns for some values of $n_0$. Slight relaxations of the Golomb ruler constraint can improve the distribution of the polynomial exponents among the residue classes modulo $n_0$, thereby avoiding degenerate constructions without necessarily introducing harmful short cycles.

The evaluation of the finite-length and asymptotic decoding performance of these codes, i.e., evaluation of their block and bit error rates under iterative decoding, is left for future work.

\bibliographystyle{IEEEtran}
\bibliography{references}

\end{document}